\documentclass{article}

\usepackage{microtype}
\usepackage{graphicx}
\usepackage{booktabs} 

\usepackage{hyperref}

\usepackage[accepted]{icml2025}


\usepackage{amsmath,amsfonts,bm}









\def\eqref#1{equation~\ref{#1}}









\def\1{\bm{1}}










\DeclareMathAlphabet{\mathsfit}{\encodingdefault}{\sfdefault}{m}{sl}
\SetMathAlphabet{\mathsfit}{bold}{\encodingdefault}{\sfdefault}{bx}{n}













\DeclareMathOperator*{\argmax}{arg\,max}
\DeclareMathOperator*{\argmin}{arg\,min}

\usepackage{hyperref}
\usepackage{url}
\usepackage{graphicx}
\usepackage{booktabs}
\usepackage{siunitx}
\usepackage{adjustbox}

\usepackage[utf8]{inputenc} 
\usepackage[T1]{fontenc}    
\usepackage{hyperref}       
\usepackage{url}            
\usepackage{booktabs}       
\usepackage{amsfonts}       
\usepackage{nicefrac}       
\usepackage{microtype}      
\usepackage{xcolor}         

\usepackage{dcolumn}
\newcolumntype{d}[1]{D{.}{.}{#1}}

\usepackage{amsthm}
\usepackage{amsmath}
\usepackage{amssymb}
\usepackage{breqn}
\usepackage{dsfont}
\usepackage{mathtools} 
\usepackage{subfig}
\usepackage{xspace}

\usepackage{array}
\newcolumntype{L}[1]{>{\raggedright\let\newline\\\arraybackslash\hspace{0pt}}m{#1}}
\newcolumntype{C}[1]{>{\centering\let\newline\\\arraybackslash\hspace{0pt}}m{#1}}
\newcolumntype{R}[1]{>{\raggedleft\let\newline\\\arraybackslash\hspace{0pt}}m{#1}}

\usepackage{wasysym}

\DeclareMathOperator*{\topk}{topk}


\newcommand{\eat}[1]{}

\DeclareFontFamily{U}{mathx}{}
\DeclareFontShape{U}{mathx}{m}{n}{<-> mathx10}{}
\DeclareSymbolFont{mathx}{U}{mathx}{m}{n}
\DeclareMathAccent{\widehat}{0}{mathx}{"70}
\DeclareMathAccent{\widecheck}{0}{mathx}{"71}

\newtheorem{theorem}{Theorem}
\newtheorem{oq}{Open Question}
\usepackage{enumitem}



\begin{document}

\icmltitlerunning{Position: Certified Robustness Does Not (Yet) Imply Model Security}

\twocolumn[
\icmltitle{Position: Certified Robustness Does Not (Yet) Imply Model Security}

\icmlsetsymbol{equal}{*}

\begin{icmlauthorlist}
\icmlauthor{Andrew C. Cullen}{yyy}
\icmlauthor{Paul Montague}{dst}
\icmlauthor{Sarah Erfani}{yyy}
\icmlauthor{Benjamin I. P. Rubinstein}{yyy}
\end{icmlauthorlist}

\icmlaffiliation{yyy}{School of Computing and Information Systems, University of Melbourne, Australia}
\icmlaffiliation{dst}{DST Group, Adelaide, Australia}

\icmlcorrespondingauthor{Andrew C. Cullen}{andrew.cullen@unimelb.edu.au}

\icmlkeywords{Machine Learning, ICML}

\vskip 0.3in
]

\printAffiliationsAndNotice{}  

\begin{abstract}

While certified robustness is widely promoted as a solution to adversarial examples in Artificial Intelligence systems, significant challenges remain before these techniques can be meaningfully deployed in real-world applications. We identify critical gaps in current research, including the paradox of detection without distinction, the lack of clear criteria for practitioners to evaluate certification schemes, and the potential security risks arising from users' expectations surrounding ``guaranteed" robustness claims. These create an alignment issue between how certifications are presented and perceived, relative to their actual capabilities. This position paper is a call to arms for the certification research community, proposing concrete steps to address these fundamental challenges and advance the field toward practical applicability.

\end{abstract}

\section{Introduction}\label{sec:introduction}

A known property of learned models like neural networks is that they can have their outputs changed through semantically indistinguishable changes to their inputs~\cite{biggio2013evasion,szegedy2013intriguing}. The risk associated with these manipulated samples---known as \emph{adversarial examples}---is heightened by both the confidence ascribed by models to these samples~\cite{kumar2020adversarial}, and how simple they are to construct, with mechanisms typically relying upon the same gradient descent style processes that are used in model training~\cite{papernot2017practical, carlini2017towards}. As Artificial Intelligence (AI) increasingly permeates both interpersonal and business interactions, these adversarial examples have the potential to impact the security of real world systems~\cite{ibitoye2019threat, finlayson2019adversarial, albert2020politics, liu2025multi}. 

In response to these security concerns, significant research effort has been devoted to what are known as \emph{adversarial defenses}, which are designed to minimize or mitigate specific attacks~\cite{chakraborty2018adversarial}. However, it is crucial to note that these defenses are rarely more than \emph{best response} strategies to particular attack vectors, as deployed mitigations can often be defeated by identifying a single, undefended vector~\cite{perolat2018playing}. This makes AI security into an expensive, reactive process that requires constant vigilance by model deployers. 

Certified defenses eschew this best response paradigm by guaranteeing the absence of a family of potential attacks, rather than any one attack~\cite{lecuyer2019certified, li2018certified}. These families are typically parameterized by $\ell_p$ inputs, with the certification guaranteeing the absence of adversarial examples within a calculated region. While the literature has primarily focused upon classifiers, recent works have begun to explore extensions of certified defenses to other problem spaces, including reinforcement learning and regression~\cite{han2018reinforcement, lutjens2020certified, hammoudeh2022reducing, liu2023enhancing, rekavandi2024certified, liu2025Fox}. 

For all the promise of such systems, when it comes to the practical implications of these approaches, the devil is very much in the details. These techniques are presented as producing the distance to the \emph{nearest adversarial example}, which implies that the sample being certified is itself not already attacked. However, as we observed by~\citet{cullen2023exploiting}, a more precise framing is that: a certification bounds the distance to the \emph{nearest class changing example}. This distinction may appear minor, however, it is crucial for understanding the limitations of these techniques, as there is no guarantee that the class prediction is accurate. As such the certification could be the distance from either a clean sample to an adversarial instance, an adversarial instance to a different adversarial class, or an adversarial sample to a clean sample. Thus current certifications provide no information for distinguishing between clean and attacked samples, with certifications existing for both, and thus, there is no inherent security inferred by the certificate. Even the idea that a certification can be considered a measure for how much effort would be required to attack a particular sample does not hold, given the observation that certifications themselves can be exploited to guide adversarial attackers~\cite{cullen2023exploiting}. 

This then leads to a problematic alignment issue, in which there is a contradiction between how these systems are presented---as reliable, \emph{guarantees} against adversarial manipulation---and the practical reality of how they perform. And this difference is crucial for practical security, as it produces an attack surface between how the security of these systems may be perceived, relative to what it can deliver. It is for these reasons that we take the position that today's certified guarantees may provide more security theater than actual security, especially if they are providing a false-sense of security to users who are not fully across these technical nuances.

\subsection{Why This Position Paper?}

Adversarial examples present a clear and present danger to AI models, and the risks associated with their existence will only grow as models are more frequently integrated into systems where incentives exist for adversarial manipulation. While certified defenses have been presented as an incorruptible solution to adversarial examples, the current literature does not support their practical application. \textbf{This paper argues that current research into certified robustness is not aligned with the provision of model security, and may, in fact, harm security}. Our interrogation of this point is supported by:
\begin{enumerate}[noitemsep,topsep=0pt]
    \item Examining the gap between the \emph{ideal} of certifications and their practical implications. 
    \item Presenting best practice for aligning research, development, and deployment to tighten model security.
    \item Arguing that an application-driven approach is crucial for enhancing the impact of certifications.
\end{enumerate}

\section{Securing Against Adversarial Manipulation}

The performance dividends made possible by deploying AI has lead to its inclusion in a broad swathe of real world systems. However, these systems introduce new frontiers of risk, as these models are incredibly sensitive to being exploited by a motivated attacker. 

Within the AI security community, significant research interest has been placed upon norm-minimising $\ell_p$ evasion attacks~\cite{papernot2016limitations}, which attempt to induce a class change at test (or inference) time to minimize the $\ell_p$ distance between the original sample and its corresponding adversarial example. The appeal of this threat model is multifaceted: it affects a broad spectrum of systems (including classifiers and reinforcement learning); leads to easy-to-construct attacks; and correlates with our conceptual understanding of human- and machine-perceptibility~\cite{gilmer2018motivating}. 

Attacks in the form of adversarial examples can be considered as variants of gradient descent, involving finding a class flipping example $\mathbf{x}'$ that approximates the minima of
\begin{align}
    &\argmin_{\mathbf{x}' \in \mathcal{S}} \| \mathbf{x} - \mathbf{x}'\|_p \\
    &\text{s.t. } \argmax_{i \in \mathcal{K}} f_i(\mathbf{x}) \neq \argmax_{i \in \mathcal{K}} f_i(\mathbf{x}')\enspace, \nonumber
\end{align}
across some permissible space $\mathcal{S}$, which is typically the d-dimensional space $[0,1]^d$ for computer vision. This framing has led to a number of distinct evasion attacks, including PGD~\cite{madry2017towards}, Carlini-Wagner~\cite{carlini2017towards}, DeepFool~\cite{moosavi2016deepfool}, and AutoAttack~\cite{croce2020reliable}, and has been shown to have the potential to compromise real world systems~\cite{wu2020making, cullen2023predicting}. Similar mechanisms also can be deployed to attack models at training time, to either corrupt learning performance or embed deleterious behaviors into the model's outputs.

\subsection{Adversarial Defenses}

While early works suggested that techniques like model regularisation and weight decay may minimize the success rates of these attacks~\cite{kukavcka2017regularization}, these mitigations have been broadly shown to be ineffective~\cite{kurakin2016adversarial, athalye2018obfuscated}. Consequently, research has shifted to developing countermeasures, known as adversarial defenses. While these approaches have demonstrated more success, they also share a common weakness, in that they serve as responses to specific attacks, and do not typically provide resistance against alternative attack frameworks. This has led to a cyclical development process, where defenses are attacked, and new defenses are subsequently proposed to counter those attacks. An example of this is single step-attacks~\cite{goodfellow2014explaining} being mitigated by adversarial training, which led to the development of multi-stage attacks~\cite{kurakin2016adversarial}. These were in turn countered by defensive distillation~\cite{papernot2016distillation}, which has subsequently been attacked. This game of cat-and-mouse demonstrates that an attacker only needs to find an undefended vector to carry out their attack. Therefore, the adversarial resistance offered by a defense is, at best, limited when faced with a motivated attacker who can evade or exploit the deployed system~\cite{meng2017magnet,carlini2017magnet}.

\subsection{Certified Defenses}

In response to the inadequacy of adversarial defenses, the AI and security communities have developed certified defenses, which construct regions around samples in which it can be guaranteed that no adversarial example exists. Crucially, these guarantees are independent of the particular attack framework, and only make basic assumptions regarding the threat model associated with the attacker. 

Certification mechanisms eschew the reactive view of adversarial defenses in favor of proactively bounding the space within which adversarial examples can exist. In some mechanisms, this might be a $\mathbf{x}$-centered $p$-norm ball of radius $r$ defined as $B_p(\mathbf{x}, r)$, where $r$ is strictly less than
\begin{align}\label{eqn:r_definition}
    r^\star &= \inf\left\{ \|\mathbf{x}-\mathbf{x}'\|_p : \mathbf{x}'\in \mathcal{S}, F(\mathbf{x})\neq F(\mathbf{x}'), \right\} \\
    &\text{           where }
    F(\cdot) = \mathds{1} \left(\argmax_{i \in \mathcal{K}} f_i(\cdot)\right) \enspace.\nonumber
\end{align}
Here $\mathds{1}$ is a one-hot encoding of the predicted class in $\mathcal{K} = \{1, \ldots, K\}$. The size of $B_p(\mathbf{x}, r)$ can be considered a reliable proxy for both the \emph{detectability} of adversarial examples~\citep{gilmer2018motivating} and the \emph{cost} to the attacker~\citep{huang2011adversarial}. 

The construction of such bounds are typically approached through either exact or high-probability methods, with interval bound propagation (IBP) and convex relaxation~\cite{mirman2018differentiable, weng2018towards} being examples of the former, and randomized smoothing~\cite{lecuyer2019certified} thelatter. While high probability methods construct high-probability bounds on the existence of adversarial examples, exact methods construct bounds by propagating intervals through the model and tracing potential class changes. 

Exact approaches require significant changes to training processes and place limits on available model architectures. Moreover, these techniques impose a significant computational cost (in terms of time and GPU memory) that scales with model size, due to the need to propagate bounds through each layer~\cite{zhang2018efficient, xu2020automatic}, which typically requires the introduction of approximations to scale to model sizes of academic/industrial interest~\cite{gowal2018scalable, singh2018fast}. In contrast, randomized smoothing can be applied to almost any model architecture or training routine, with the only required changes occurring before the input layer and after the output layer. While randomized smoothing does require significant numbers of model samples to be evaluated, the computational time implications of this can be partially ameliorated, as the sampling process is embarrassingly parallel, making it a powerful alternative to bound-propagation style approaches~\cite{cohen2019certified}.

We must emphasize that the following definitions, and their associated techniques, are heavily aligned with evasion attacks in the interests of clarity. While some recent works have begun to consider constructing certifications against other attack frameworks, there still exist a broad range of attacks outside the aegis of evasion attacks, including backdoor attacks that embed deleterious behaviors that can be manipulated; model stealing attacks, where proprietary information is extracted from the model; check fraud, which forces the model to read a larger amount of data than what is written~\cite{papernot2016distillation} and more. 

\subsubsection{Randomized Smoothing}

The certifications constructed by randomized smoothing~\citep{lecuyer2019certified} are built around a Monte Carlo estimator of the expectation of a class prediction, where 
\begin{eqnarray}\label{eqn:expectations}
     \frac{1}{N} \sum_{j=1}^{N} F(\mathbf{X}_j) &\approx& \mathbb{E}_{\mathbf{X}}[F(\mathbf{X})]  \qquad \forall i \in \mathcal{K} \\
        \mathbf{X}_1, \ldots, \mathbf{X}_N, \mathbf{X} &\stackrel{i.i.d.}{\sim}& \mathbf{x} + \mathcal{N}(0, \sigma^2)\enspace. \nonumber
\end{eqnarray}
These expectations can be employed to provide guarantees of invariance under \emph{additive} perturbations. In forming this aggregated classification, the model is re-construed as a \emph{smoothed classifier}, which in turn is certified. Mechanisms for constructing such certifications include differential privacy~\citep{lecuyer2019certified,dwork2006calibrating}, R\'{e}nyi divergence \citep{li2018certified}, and parameterising worst-case behaviors \citep{cohen2019certified, salman2019provably, cullen2022double}. The latter of these approaches has proved the most performant, and yields certifications of the form
\begin{equation}\label{eqn:Cohen_Bound}
r = \frac{\sigma}{2} \left( \Phi^{-1}\left(\widecheck{E}_{0}[\mathbf{x}]\right) - \Phi^{-1}\left(\widehat{E}_{1}[\mathbf{x}]\right) \right)\enspace,
\end{equation}
where $\Phi^{-1}$ is the inverse normal CDF, 
$(E_0, E_1) = \topk\left(\left\{\mathbb{E}_{\mathbf{X}}[F(\mathbf{X})]\right\}, 2\right)$
, and $(\widecheck{E}_0, \widehat{E}_1)$ are the lower and upper confidence bounds of these quantities to some confidence level $\alpha$ \citep{goodman1965simultaneous}. 

\subsubsection{Interval Bound Propagation}

Conservative certificates upon the impact of norm-bounded perturbations can be constructed by way of either interval bound propagation (IBP) which propagates interval bounds through the model; or convex relaxation, which utilizes linear relaxation to construct bounding output polytopes over input bounded perturbations. In contrast to randomized smoothing, which constructs isotropic measures of $\ell_p$-robustness, interval bound propagation and its associated techniques attempt to propagate the potential influence of all possible perturbations through the model, producing an anisotropic measure of the potential response of a model to any potential perturbation~\cite{salman2019convex, mirman2018differentiable, weng2018towards, CROWN2018, zhang2018efficient, singh2019abstract, mohapatra2020towards}. Of these, IBP is more general, while convex relaxation typically provides tighter bounds~\cite{lyu2021towards}.  

Utilizing these techniques requires introducing an augmented loss function during training to promote tight output bounds \cite{xu2020automatic}. These schemes have also, until very recently, been heavily limited in the types of network architectures that they can successfully construct bounds through, with only recent works demonstrating an applicability to a nonlinear activation functions beyond ReLU~\cite{shi2023formal}. Moreover they both exhibit a time and memory complexity that makes them infeasible for complex model architectures or high-dimensional data~\cite{wang2021beta, chiang2020certified, levine2020randomized}.

\subsubsection{Global Lipschitz}

Global Lipschitz takes an alternative approach to constructing certifications, a point that they distinguish through the framing of local and global robustness. The guarantees provided by prior works, which can take the form
%
\begin{equation}
    \| \mathbf{x} - \mathbf{x}' \|_p \leq \epsilon \implies F(\mathbf{x}) = F(\mathbf{x}')
\end{equation}
are considered to be local properties, that relate $\mathbf{x}$ and $\epsilon$. Lipschitz based techniques instead attempt to construct their certifications in terms of \emph{global} robustness, where
\begin{equation}
    \forall \mathbf{x}_1, \mathbf{x}_2 : \| \mathbf{x}_1 - \mathbf{x}_2 \|_p \leq \epsilon \implies F(\mathbf{x}_1) \overset{\perp}{=}  F(\mathbf{x}_2)\enspace.
\end{equation}
Here $\perp$ is the marker for an \emph{abstained} class prediction, and $c_1 \overset{\perp}{=} c_2$ denotes that either $c_1 = \perp$, $c_2 = \perp$, or $c_1 = c_2$. In essence such a form of certification involves constructing a model that has not only an infinitesimally thin decision boundary, but a margin between the regions associated with each class, where $\epsilon$ then becomes the shortest $\ell_p$ distance to span the boundary. Several attempts have been made to use Lipschitz bounds during training to promote robustness. These include constructing provable lower bounds on the norm of the input manipulation required to change classifier decisions based upon the network architecture~\cite{hein2017formal}; modifying the loss associated with logits different than the ground-truth class~\cite{tsuzuku2018lipschitz}; and GloRoNets, which add an additional logit corresponding to the predicted class at a point~\cite{leino2021globally}. While these techniques can be an order of magnitude faster than randomized smoothing, they are both less flexible---in terms of the architectures they support---and often produce smaller certifications than randomized smoothing.~\cite{leino2021globally}. 

\section{Certifications for Model Security}

Whether presented as a `certified guarantee' or a `certified defense', that it is a `certification' heavily implies an absolute improvement to model security. 
This impression is driven not just by the naming of these techniques, but also how they are described. After all, these are techniques that are guaranteed to apply in all circumstances, irrespective of the attacker's behavior. To both academic and non-academic readers  who are even passingly familiar with the security risks associated with adversarial examples, such properties are incredibly appealing. 

However, it must be emphasized that certified defenses do not operate in the same manner as a traditional defense. While a traditional defense ideally increases the difficulty of performing an attack, a certification only measures the distance to the nearest class-flipping example. In the literature this is typically framed as the distance to the nearest possible adversarial example, however this is not strictly true for deployed models, as \emph{adversarial examples can also be certified}~\cite{cullen2023exploiting}. That a certification is the distance to the nearest possible adversarial example is only true under the settings of many academic papers, in which oracle level knowledge of the true class is presumed. 

This clear disparity between how certifications may be perceived, and what they actually produce presents a security risk that can potentially be exploited by motivated attackers. After all, if a model deployer is confident that their model is certifiably robust against adversarial examples, there is potentially no need to implement any other security measures. This is especially worrisome when certification mechanisms are inherently limited to specific types of threat models---for example, geometric attacks (rotational or translational) are unlikely to be covered by traditional $\ell_p$ based threat models~\cite{xiao2018spatially, dumont2018robustness}.

In practice, as the following theorem argues, the only information provided by the certificate is the distance to the nearest potential class-flipping example, rather than providing any information regarding if the sample has been attacked or not. If a point is correctly predicted, then this distance may be the distance to the nearest adversarial example, or to the true semantic class boundary. However, if the point is an adversarial example whose class expectation is large enough to produce a certification, then the certification is the distance to the true class.

If we know that any potential attacker is $\epsilon$ bounded within the $\ell_p$ norm that we have been able to certify, then the guarantee will ensure that the class prediction will remain constant for these attacks. However, this does not guarantee  that the prediction is correct, nor that it has not been the subject of an attack. While it may be true that certifiable adversarial examples may produce smaller certifications, due to the inherent proximity of adversarial examples to decision boundaries, this is only a heuristic, with no theoretical backing. While this observation may allow certifications to be used to stack-rank risk using certifications in a comparative fashion, we would argue that the only reliable , actionable information that a certification technique may currently provide is the absence of a certification.

\begin{theorem}
    A certification of size $\epsilon$ associated with the input $\mathbf{x}$ to a model $f$ could correspond to either a certification of the correct class, that is representative of the semantic space that the sample exists within; or a certification of an incorrect class, one which is not representative of the semantic space a sample exists within. Thus the existence of a certification does not intrinsically provide any information regarding if the sample $\mathbf{x}$ has been attacked or not.   
\end{theorem}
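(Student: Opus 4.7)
The plan is to prove the statement by construction: exhibit two scenarios that yield numerically identical certifications of radius $\epsilon$, where the predicted class aligns with the semantic label in one case and disagrees with it in the other. The structural reason this approach will work is that the certification radius in Equation~\ref{eqn:r_definition} references only the model's output function $F$ and the ambient space $\mathcal{S}$; nowhere does it consult an oracle for the semantic class of $\mathbf{x}$. Hence any discriminative information about attack status would have to be recoverable from $F$ alone, which the two-case construction shows is impossible.

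First I would fix the formal setting: a classifier $f$, its hard-label output $F$, the permissible space $\mathcal{S}$, and the certifier returning $r^\star$ as in Equation~\ref{eqn:r_definition} (or, analogously, the smoothed radius of Equation~\ref{eqn:Cohen_Bound} under randomized smoothing). Second I would construct a clean case: an input $\mathbf{x}_1$ whose true semantic class coincides with $F(\mathbf{x}_1)$ and for which $r^\star(\mathbf{x}_1) = \epsilon$. Any sample sufficiently interior to its correctly-predicted class region suffices, and such inputs abound for any non-degenerate classifier.

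Third, I would construct an adversarial case: an input $\mathbf{x}_2$ whose semantic class differs from $F(\mathbf{x}_2)$ yet for which $r^\star(\mathbf{x}_2) = \epsilon$. This is \emph{the crux of the argument} and the main obstacle, since one might initially expect adversarial examples to lie arbitrarily close to a decision boundary and therefore certify to radius zero. I would address this in two complementary ways. Structurally, I would note that nothing in Equation~\ref{eqn:r_definition} precludes the existence of a region of correctly-predicted-but-semantically-wrong classification of positive measure: any model mistake that occupies more than an infinitesimal volume under the relevant norm induces such a region, and for continuous $F$ the radius map $\mathbf{x}\mapsto r^\star(\mathbf{x})$ is continuous enough that one can land on any target value in its attainable range. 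Empirically, I would appeal to \citet{cullen2023exploiting}, who demonstrate across standard certification pipelines that adversarial examples can receive non-trivial, tunable certifications, so that by continuous choice of attack strength one can hit any feasible $\epsilon$.

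Finally, the two cases $(\mathbf{x}_1,\epsilon)$ and $(\mathbf{x}_2,\epsilon)$ are by construction indistinguishable to any decision procedure whose only inputs are $\mathbf{x}$ and the certifier's radius, which establishes both clauses of the theorem simultaneously. The main obstacle is the adversarial case in step three; handling it via the cited prior work keeps the argument model-agnostic while being honest that a fully theoretical guarantee for arbitrary $f$ would require additional smoothness or margin assumptions that the theorem as stated deliberately avoids.
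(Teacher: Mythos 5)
Your construction is sound and establishes the claim, but it takes a noticeably different, and heavier, route than the paper. The paper's own justification is a single degenerate counterexample: a stochastic, location-invariant classifier whose class expectation is fixed at $0.75$ and whose prediction is constant over all of $\mathcal{S}$. Such a model certifies every point with the identical radius, yet is necessarily wrong on a large fraction of inputs, so certifications of the same size attach to both semantically correct and semantically incorrect predictions and the mere existence of a certificate carries no information about attack status --- no continuity argument, no matching of radii, and no appeal to empirical results is needed. Your two-point matched construction ($\mathbf{x}_1$ clean and correctly classified, $\mathbf{x}_2$ misclassified, both with $r^\star=\epsilon$) proves the same possibilistic statement for realistic, non-degenerate classifiers, which is a genuine strength: it shows the phenomenon is not an artifact of pathological models, and your structural observation that Equation~\ref{eqn:r_definition} never consults the semantic label is exactly the paper's underlying point. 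The costs are that you must invoke \citet{cullen2023exploiting} (which the paper cites as motivation, not as part of the argument) and an intermediate-value argument to hit the same $\epsilon$ in both regions; also note one small slip: $F$ is a hard-label (one-hot argmax) map and is not continuous --- the continuity you need is that of the distance-to-decision-boundary map $\mathbf{x}\mapsto r^\star(\mathbf{x})$ within a fixed decision region, which is $1$-Lipschitz as a distance function, not a consequence of continuity of $F$. With that correction, and since the theorem only asserts that both situations \emph{could} occur, either your argument or the paper's one-line counterexample suffices; the paper's is more elementary and self-contained, yours is more informative about real classifiers.
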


As an example of this, consider a stochastic, location invariant classifier, that produces a fixed expectation of $0.75$ and a constant class prediction across all $\mathbf{x} \in \mathcal{S}$. While this classifier will certify all points, the classifier will have low accuracy, and the certified will not provide any actionable information. While this point may appear obvious, it underpins the inherent contradiction between how certifications are presented---as a security guarantee---and how they operate in actuality. 

An additional consequence of the above theorem is that having access to a certification provides an attacker additional information regarding where adversarial examples may or may not be~\cite{cullen2023exploiting}. This allows an attack to be guided not just by gradients, but by knowledge of where adversarial examples can and cannot exist. Thus, if the attacker has access to the certifications, then they have an information advantage relative to an uncertified model. Thus it can be argued that \emph{employing certification mechanisms may compromise AI security}~\cite{cullen2023exploiting}. 

\subsection{Employing Certifications}

Given these observations, there is a clear need for the certification research community to acknowledge the limitations inherent to certifications, and to reflect on how the framing of these techniques may drive misapprehensions about the levels of security provided. At the most basic level, this should include emphasizing that certifications should only be accessible to those who have trusted access to the model. Throughout this paper, we will explore potential research directions relating to both this and other issues through a series of open questions. 

\begin{oq}
How best should certifications be employed to enhance model security? 
\end{oq}
Based upon our discussions to this point, we can treat a certification as a heuristic measure of how likely it is that a sample may have been manipulated. However, thinking about a certification in isolation also potentially minimizes how information security is practiced. 

To take a more systematic perspective, consider organizational security as a composition of operations, that could include rules, algorithmic screening, human operators, and more. If a model produces a radius, how would that information be best served by other components of that process flow? Should the information of the certification be propagated through to subsequent tasks (or even back to earlier operations)? How could a certification be incorporated into a multifaceted assessment of risk, for both individual samples and for collective sets? Can certifications be informed by measures of risk at preceding steps of the model pipeline? 

These questions may seem vague, but it is crucial to think about how techniques designed for mitigating risk---as certifications are designed to do---may exist in the context of real-world risk management frameworks. Both ISO/IEC 27001:2022 and the NIST AI Risk Management Framework~\cite{ISO27001_2022, NIST_AI_RMF_2022} treat AI systems governance as something that requires continuous, active, multifaceted risk monitoring and assessment. For organisations, conforming to such information security controls is crucial not just for managing their own risk, but for aligning with legal and auditing expectations. In the case of smaller organisations, simply recording certifications may be enough to satisfy auditing requirements, but more complicated security apparatus will require a more nuanced perspective to be taken. 

The challenge with attempting to answer questions like these in an academic context is that they do not align well with the tools that we have at our disposal. We do not have easy access to real-world information security risk frameworks. And even if we did, any testing we performed would likely produce results that were specific to particular organizations. This is not to say that these problems are not able to be studied within an academic context. In fact, facets of this problem space can be seen in the fields of mathematical risk management, human-in-the-loop computing, human computer interaction, game theory~\cite{zhou2019survey, sun2023adversarial, cullen2024game, adams2025suboptimality}, and psychology. This suggests that working towards a more holistic view of certified robustness will require multidisciplinary research expertise. 

Taking such a perspective is critical to avoid certifications becoming more security-theater than actual security. As has been noted in the differential privacy community, the inherent trade off between user privacy and utility in differentially private systems creates a tension that has the potential to lead system creators to minimize transparency. Doing this has the potential to convert privacy guarantees into advertizing material and window dressing, that provides only the appearance of positive user benefit~\cite{khare2009privacy}. In response to this, recent observational studies have begun to consider both how expectations of privacy shape user habits, and how clarifying private mechanisms can induce confidence in system privacy \cite{xiong2020towards, smart2022understanding}. If certification schemes are to be considered as similarly important for demonstrating model security, then it is important to both consider and study how the framing of these mechanisms affects user expectations.

\begin{oq}
    What is required for certifications to be practically deployed for end users?
\end{oq}

While works examining randomized smoothing, IBP, and global Lipschitz-style certifications often highlight their relative benefits, the level of detail provided is typically insufficient for end-users to assess whether an approach suits their needs. This is particularly true when user requirements span factors such as resource demands, ease of deployment, and certification performance on datasets relevant to their use cases. We believe it is crucial for researchers to develop a shared framework for analyzing certification schemes, offering more contextual information about their performance.

Typically, certification works allude to their employed computational resources, which is sometimes supplemented with a discussion of the total computational time required. However, in practice comparisons between the resource demands imposed by different techniques are rare, and yet these very comparisons are crucial for determining the suitability of these schemes for end users. This is especially so when production environments may not share the same bottlenecks as research systems, which may lead to differing perspectives on how computational costs would be perceived. 

In practice, we contend that practitioners should better structure their comparisons in terms of the resource requirements, the computational time required, and the level of achievable parallelism. While it is simple to state this as a necessity, in practice these comparisons are complicated by the stark methodological differences between the core techniques. To take randomized smoothing as an example---the large number of draws required to construct the expectations may appear to be numerically expensive. However, in practice this task of repeated model draws is embarassingly parallel, can be split over arbitrarily many GPUs, and only requires as much memory as is required to hold the model. In contrast interval bound propagation typically only requires a single pass to establish a certification. However implementing this requires both significant amounts of computational time and GPU memory to construct the certification, which intrinsically limits the size of models that can be certified.

While there is a paucity of comparisons between the different certification frameworks, the International Verification of Neural Networks Competition's (VNN-COMP) comparisons between Interval Bound style certification mechanisms serves as a proof of concept for how these comparisons could be performed~\cite{muller2022third, brix2023fourth}. VNN-COMP builds comparisons between the success rates and running times, while controlling for run-time related issues by providing a shared codebase and prescribed computational environments, demonstrating that it is possible to begin to construct broader comparisons. However it is crucial to emphasize that the VNN-COMP comparisons only exist for IBP style certifications, and do not consider randomized smoothing or Lipschitz approaches.

The net result of constructing more rigorous estimates on computational cost will likely require a broader set of experiments than those typically performed in certification papers---especially with regard to the impact of different model sizes. However, it is important to stress that such an analysis should not be strictly rooted in trying to demonstrate the superiority of a technique, but it should rather be focused upon delving into the properties of the technique. 

In the absence of clear practices for deploying certification schemes, research on computational cost should aim not to prove the superiority of any technique, but to provide knowledge that helps practitioners decide whether to use a certification framework. While this may be challenging given publishing conventions focused on state-of-the-art improvements, it could open new opportunities for comparing different certification schemas.

\begin{oq}
    How do we test certification schemes in a manner that reflects real world use cases?
\end{oq}

Establishing certification performance on key reference datasets like MNIST \citep{lecun1998gradient}, CIFAR-$10$ \citep{krizhevsky2009learning}, and the Large Scale Visual Recognition Challenge variant of ImageNet \citep{deng2009imagenet, russakovsky2015imagenet} are important tools for validating research works. However, the semantic properties of these datasets, and their diversity---or, more precisely, their lack thereof---limits the ability to transfer these results to other datasets of interest. This has even been shown to extend to datasets in different contexts to those in which the datasets were originally sampled, due to geographic and cultural biases that are driven by the very mechanisms through which these datasets were originally constructed~\cite{buolamwini2018gender, celis2020implicit, karkkainen2021fairface, mandal2021dataset}. While some task-specific works have begun to consider broader views on certification datasets~\cite{dvijotham2020framework, korzh2024certification}, there clearly exists significant space for broadening the scope of how these systems are evaluated, to better demonstrate and understand utility. 

As noted by~\citet{cullen2023its}, the performance properties of different certification techniques can vary based upon the distribution of points within what they describe as the simplex of potential output spaces. As it is likely that datasets of interest may not share the same properties as those employed within academic research, it is important that we broaden our appreciation of what exactly state of the art is, and how techniques can be selected to maximize utility for specific tasks. 

Beyond this, while improving the size of certified guarantees will always be important, it is also crucial that users are supported with the information to contextualize said guarantees. After all, a certification with an $\ell_p$ size of $2$ (for some $p$) likely does not intrinsically convey enough knowledge to understand the risk associated with a sample from an arbitrary dataset being attacked---for it may be that all samples are clustered within a distance of $2$ of the sample point, or there may not be a single other clean sample within this radius. Thus, for these systems to have real world applicability and interpretability, techniques to contextualize certification sizes are crucial. 

A source of inspiration for improving the quality of testing within the certification literature is Instance Space Analysis~\cite{smith2010understanding, munoz2018instance}, which can be used to create a representative footprint of where samples may exist. This data-driven approach allows practitioners to both quantify how much coverage a dataset provides over potential input space, but also can guide the generation of new datasets. Drawing from such approaches may be useful to better understand the factors that drive certification performance in datasets that do not resemble the community's typical reference datasets. 

\section{Coverage}

To extend upon the preceding discussions of improving certifications to enhance the concept of model security, we now turn to more practical considerations.

\begin{oq}
    How do we improve the quality of guarantees provided by certifications?
\end{oq}

Intuitively, the size associated with a certification is directly correlated with its applicability, with larger regions of coverage providing more general guarantees, and more security. While true, it is also important to note that current certifications are often small enough to not obviate the existence of imperceptibly small adversarial perturbations. Thus increasing the size of certifications will inherently decrease the risk of attack~\cite{gilmer2018motivating}. 

This perspective on certification size being a direct measure of risk is challenged by geometric perturbations, where the  $\ell_p$ distance may not reflect the level of difficulty in either constructing or detecting a manipulation. Moreover, any $\ell_p$ certification can be negated if the attacker can operate in some space $\ell_q : p \neq q$. While there is overlap between the regions of coverage provided by differing $\ell_p$ spaces, the potential for shifting the attack norm may introduce opportunities for the attacker to exploit. 

To understand the implications of this, it is important to remember that for an attacker to be successful, they only need to find a single working adversarial example. By contrast, a defender ideally must prevent all adversarial examples from being passed through the model. In a certification context, consider a scheme that produces an $\ell_p$-norm ball of size $r_p$. As indicated by Theorem~\ref{theorem:bounding}, if $q < p$ no $\ell_q$-norm adversarial examples exist with size $r_q < r_p$. However, if $q > p$, \emph{smaller} $\ell_q$ \emph{norm adversarial examples may exist}! Thus mismatches between attacker and certification norms can potentially induce an unfounded sense of security. This is especially so when it is possible that the certification norm potentially does not align with the capacity for these adversarial examples to be detected. 

\begin{theorem}\label{theorem:bounding}
    Consider a $\ell_p$-certification out to a distance of $r_p$. Potential adversarial attacks for an attacker operating with an $\ell_q$-norm attack exist for $r_q > \min\left( d^{1/q - 1/p}, 1 \right) r_p$.
\end{theorem}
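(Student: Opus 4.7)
The plan is to recognize that this is really just the classical $\ell_p$-norm equivalence on $\mathbb{R}^d$ rephrased in the language of certification radii. Recall that for any $\delta \in \mathbb{R}^d$ and indices $1 \le p \le q \le \infty$ we have the sharp inequalities $\|\delta\|_q \le \|\delta\|_p \le d^{1/p - 1/q}\|\delta\|_q$, with equality on the left attained by any single-coordinate vector and on the right by a uniformly-valued vector. I would open the proof by recording these facts (or the corresponding statement for $q \le p$), since everything else is bookkeeping built on top of them.

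Next I would translate the $\ell_p$-certification into a constraint on perturbations: the certification of radius $r_p$ precludes every perturbation with $\|\delta\|_p \le r_p$, so an $\ell_q$-budgeted attacker with budget $r_q$ can only hope to succeed if there exists $\delta$ satisfying $\|\delta\|_q \le r_q$ and simultaneously $\|\delta\|_p > r_p$. The smallest $r_q$ for which such a $\delta$ can exist is therefore $r_q^\star = \inf\{\|\delta\|_q : \|\delta\|_p = r_p\}$, and by positive homogeneity this equals $r_p \cdot c(p,q,d)$ with $c(p,q,d) := \inf\{\|u\|_q : \|u\|_p = 1\}$.

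I would then compute $c(p,q,d)$ in two cases using the inequalities from the first step. When $q \ge p$, the bound $\|u\|_q \ge d^{1/q - 1/p}\|u\|_p$ gives $c(p,q,d) \ge d^{1/q - 1/p}$, and this is attained by the uniform vector $u = d^{-1/p}(1,\ldots,1)$; note that $d^{1/q - 1/p} \le 1$ here. When $q \le p$, the inequality $\|u\|_q \ge \|u\|_p$ yields $c(p,q,d) \ge 1$, attained by any standard basis vector, and in this regime $d^{1/q - 1/p} \ge 1$. In either case the infimum coincides with $\min(d^{1/q - 1/p},1)$, so the theorem follows after multiplying by $r_p$.

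The technical work is routine; the more delicate point that I would flag explicitly is modelling, not calculation. The statement is a \emph{necessary} condition for $\ell_q$-attacks to escape the $\ell_p$-certification, not a sufficient one for adversarial examples actually to appear in the model: the threshold tells us only that beyond it the certification no longer rules out such perturbations, and I would stress that attainment of the infimum by a very structured vector (uniform, in the $q \ge p$ case) means the bound is sharp in the worst case but may be pessimistic against attackers constrained to realistic perturbation geometries.
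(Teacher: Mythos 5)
Your proposal is correct and follows essentially the same route as the paper: the paper's proof also identifies the threshold via the extremal directions where the two norm balls touch (a coordinate-axis point giving $r_q = r_p$ when $q < p$, and the diagonal point giving $r_q = d^{1/q-1/p} r_p$ when $q > p$), which is exactly your infimum computation with basis and uniform vectors. Your write-up is simply a more explicit, homogeneity-based rendering of the paper's terse geometric intersection argument, and your closing caveat about necessity versus sufficiency matches the paper's framing of "potential" attacks.
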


\begin{proof}
    When $q < p$, the two regions of certification intersect at $r_q = r_p$, and thus there exists some points in the region $r_q > r_p$ that admit potential adversarial examples. 

    When $q > p$, the regions of certification intersect at $\frac{r_q}{d^{1/q}} = \frac{r_p}{d^{1/p}}$, for a $d$-dimensional space. Thus there exists points at distances $r_q > d^{1/q - 1/p} r_p$ which are not covered by the $\ell_p$ region of certification. 
\end{proof}

While Theorem~\ref{theorem:bounding} does present a mechanism for translation from $\ell_p$ certifications to $\ell_q$ threat models, ideally we should be considering how to optimize certifications for the threat model of interest, as can be seen in recent works that have begun to generalize certifications away from $\ell_2$ threat models~\cite{yang2020randomized,huang2023rsdel}. We should also explore moving past individual threat models to instead consider maximizing certification coverage. However, this will require us to fundamentally change how we assess certification performance. While prior works have demonstrated that it is possible to ensemble certifications \cite{cullen2023its}, their approach was still rooted in an $\ell_2$ space. Ultimately maximizing coverage may require new certification mechanisms in other $\ell_p$ spaces and even non-$\ell_p$ threat models such as edit distance for sequence classifiers~\cite{huang2023rsdel}. It may also require balancing the costs of performing multiple certifications, and the added utility provided by such a layering. This leads to an additional question, regarding how new certifications can be constructed. 

\begin{oq}
    How can we build certification mechanisms that can be generalized to a broader set of model types?
\end{oq}

To this point, while we have attempted to be general in our consideration of certifications, there has been an inherent bias towards the robustness of classifiers, and classifier-like systems. This bias reflects that of the overall certification space, which is heavily weighted towards works considering classifiers under $\ell_2$-norm bounded (or $\ell_p$) threat models. While ensuring classifier performance is important, there is no guarantee that the AI systems that we will most heavily rely upon in the future will be similar, nor that the risks of adversarial manipulation will be limited to such classifiers~\cite{mangal2023certifying}. While recent works have begun to consider how certified robustness can be generalized to frameworks like reinforcement learning~\cite{lutjens2020certified, kumar2021policy, mu2023certified, wu2022copa}, there still remains significant potential for expanding the scope of problems considered through certifications. 

\section{Secure Development}

Finally, it is critical to consider how certification techniques can be developed into secure models and certification code implementations. After all, the security guarantees provided by certifications will be for naught if they cannot be incorporated into deployed code. 

\begin{oq}
    How do we incentivize the development of more secure code?
\end{oq}

To date, research projects on certification have remained at low levels of technology readiness. While this is understandable given the level of maturity of the field, if there is to be adoption of these systems, and impact outside of research, it is clear that significant care must be taken to implement secure code. This goal would be supported by systems that are open and transparent, to both enable audits~\cite{ding2017privacy} and enhance user confidence in system performance. Extensible implementations would also facilitate adaptation to a range of applications. 

A source of inspiration is OpenDP~\cite{openDP2020, gaboardi2020programming} which has generalized custom, task-specific tools to a domain-agnostic framework for enhancing the privacy guarantees associated with data access principles. OpenDP's development process is an exemplar of how security and accessibility can be balanced through an open, application driven development process, that is built upon a rigorous code- and proof-checking process. In doing so, OpenDP has managed to rapidly gain significant buy-in from both researchers and developers~\cite{lokna2023group}. 

It would be possible for a similar set of processes to be employed for the development of more production-ready certification code-bases. The advantages of such an approach would not just be useful for potential deployments of certification frameworks, but would also allow for different schemes to be more readily tested against each other. Such comparisons would only benefit the development of the field. Moreover, a shared research framework could also help induce a greater sense of confidence in new works, as it would be clear that they were operating upon the same framework employed by earlier techniques. 

Additional lessons can also be drawn from the specific development pathways of cryptographic and differential privacy implementations, and how these pathways have lead to new research developments. Rather than directly implementing state-of-the-art research, OpenDP and crytrographic protocols have often draw inspiration from these systems while focusing upon ensuring that they work to a set of assumptions that allows for the develop of reliably robust, testable, and verifiable systems. This focus upon how systems are employed in practice, in contrast to the more standard academic assumptions that had existed in prior works then in turn lead to new areas of research interest in areas like rounding and floating point issues in Differential Privacy correctly~\cite{mironov2012significance, balcer2017differential, jin2022we}, the development of side-channel~\cite{jin2022we} and floating-point attacks~\cite{jin2024getting}, and explorations of the impact of privacy budgets~\cite{jin2024elephants}. These examples provide a clear precedent for how examining security problems with an eye to how they will be deployed within the real world can pay both practical and research dividends. 

\section{Alternative Views}

While the preceding content argues that certified robustness schemes share significant open weaknesses, a counterfactual perspective would be that any improvements provided by a certification still enhances model security, even if these improvements do not provide complete security. While this is true, as we have argued within this paper, we believe there is a high likelihood that these schemas will result in security theatre, rather than security. The likelihood of this is heavily driven by the presentation of certifications as a guarantee of robustness. When certifications are marketed as definitive proof of a model’s resilience to adversarial examples, it creates the illusion of a level of protection that may not truly exist. Such a false sense of security could divert attention from more robust, ongoing security measures and research, potentially leading users to neglect further model improvements or defensive strategies. Moreover, incorporating certifications may also foster complacency in model development.

It could also be argued that these expectations are not the responsibility of the certification community, and that these systems are being developed for technical users at this stage, with future developments taking care of ease of use and broader adoption. While basic research is undeniably important, these benefits do not negate concerns for how certifications are being communicated. For technical users, it is essential to understand the limits of certified robustness guarantees and the broader implications of these systems, particularly in industries that rely on AI models. If certifications are not accompanied by clear explanations of their limitations and scope, there is a risk that non-technical stakeholders---who may not fully grasp the underlying complexities---could misinterpret or overestimate the significance of these certifications. This disconnect could lead to misuse or overreliance on certifications, which in turn may hinder the development of more comprehensive security strategies. In the long term, as these systems become more accessible and widespread, there will be an increased need for transparency and clear communication about what certified robustness can and cannot achieve. Without this, the potential for security theater remains a significant concern.

\section{Conclusions}

The AI security community has consistently been producing research that moves the needle in terms of our understanding of risks facing models, and the strategies that can be employed to mitigate said risks. However, for all of the research insights gained, there is a strong case to be made that the current literature has not yet bridged the gap from practical promise to deployable applications. This is particularly true for certified robustness, given the expectations of users searching for a guaranteed solution to the risk of adversarial manipulation. 

Indeed, the very framing of certified robustness as providing guarantees of adversarial resistance can create a false sense of security, and more broadly, an alignment issue between how users would likely perceive these systems and their actual performance. Given this, within this paper we argue that additional care must be taken to ensure that these certifications are presented in a manner that prevents their being misinterpreted as blanket protections, rather than constrained assurances tied to specific threat models, perturbation bounds, and data distributions. Such misunderstanding can lead to adverse security outcomes, especially in real-world deployment scenarios where defenders do not have access to oracle information, and where adversaries do not necessarily conform to the narrow theoretical bounds underpinning certification techniques. 

This work is not just an excoriation of current research practices. We also argued that considering how these systems are used and perceived can inspire new, interesting research directions. This work would extend beyond the current remit of certification research---which is broadly focused on improving bulk metrics on reference data sets---into exploring the how and why of certification performance, the factors that incentivize the development of secure code, the applicability of certifications to real world threat models, and how they should be communicated to stakeholders. These rich new veins of research questions have the potential to significantly improve the safety of deployed AI systems.

\section*{Impact Statement}

This work takes the position that for all its promise, certified robustness still has a long way to go before being ready for wide-spread deployement as a real-world method of securing AI. While this position may superficially appear to be pessimistic, we believe that constructive discussion about the current state of certification research can help reveal new, productive research directions. If the research community can progress solutions, we believe that certifications can provide a tangible impact on AI security, especially where AI is deployed in high-risk and high stakes contexts, for positive societal impact.

\section*{Acknowledgments}

This work was supported by the Australian Defence Science and Technology (DST) Group via the Advanced Strategic Capabilities Accelerator (ASCA) program.

\bibliography{./icml2024_conference_bib}
\bibliographystyle{./iclr2023_conference}

\end{document}